\documentclass[reqno]{amsproc}

\usepackage{enumitem}
\usepackage{bbm}
\usepackage{geometry}
\usepackage{amsmath,amsthm}
\usepackage{color}
\usepackage{hyperref}
\definecolor{myurlcolor}{rgb}{0,0,0.7}
\hypersetup{colorlinks,
linkcolor=myurlcolor,
citecolor=myurlcolor,
urlcolor=myurlcolor}

% Theorem Environments
\theoremstyle{plain}
\newtheorem{thm}{Theorem}
\newtheorem*{thm*}{Theorem}
\newtheorem{lem}[thm]{Lemma}

\theoremstyle{definition}

\theoremstyle{remark}

% macros

\newcommand{\Z}{\mathbb{Z}}
\newcommand{\N}{\mathbb{N}}

\newcommand{\Q}{\mathbb{Q}}
\newcommand{\R}{\mathbb{R}}
\newcommand{\ra}{\rightarrow}
\newcommand{\eq}[1]{~(\ref{#1})}
\newcommand{\bydef}{\stackrel{\mathrm{def}}{=}}
\newcommand{\tr}{\mathrm{tr}}
\renewcommand{\H}{\mathcal{H}}
\newcommand{\U}{\mathcal{U}}
\newcommand{\spec}{\mathrm{spec}}

\newcommand{\eps}{\varepsilon}

%%%%%%%%%%%% Enumeration via lowercase letters
\renewcommand{\labelenumi}{(\alph{enumi})}

\newcommand{\beq}{\begin{equation}}
\newcommand{\eeq}{\end{equation}}

\begin{document}

\title[On infinite-dimensional state spaces]{On infinite-dimensional state spaces}

\author{Tobias Fritz}
\address{Perimeter Institute for Theoretical Physics\\
Waterloo N2L 2Y5, Canada}
\email{tfritz@perimeterinstitute.ca}

\thanks{\textit{Acknowledgements.} Discussions with Antonio Ac\'in, Toby Cubitt, Anthony Leverrier, Magdalena Musat, Miguel Navascu{\'e}s, Mikael R{\o}rdam, Andreas Thom and others are much appreciated and have greatly helped in interpreting the present results. Financial support has been granted by the EU STREP QCS. This research was also supported in part by Perimeter Institute for Theoretical Physics.}

\date{\today}

\begin{abstract}
It is well-known that the canonical commutation relation $[x,p]=i$ can be realized only on an infinite-dimensional Hilbert space. While any finite set of experimental data can also be explained in terms of a finite-dimensional Hilbert space by approximating the commutation relation, Occam's razor prefers the infinite-dimensional model in which $[x,p]=i$ holds on the nose. This reasoning one will necessarily have to make in any approach which tries to detect the infinite-dimensionality. One drawback of using the canonical commutation relation for this purpose is that it has unclear operational meaning. Here, we identify an operationally well-defined context from which an analogous conclusion can be drawn: if two unitary transformations $U,V$ on a quantum system satisfy the relation $V^{-1}U^2V=U^3$, then finite-dimensionality entails the relation $UV^{-1}UV=V^{-1}UVU$; this implication strongly fails in some infinite-dimensional realizations. This is a result from combinatorial group theory for which we give a new proof. This proof adapts to the consideration of cases where the assumed relation $V^{-1}U^2V=U^3$ holds only up to $\eps$ and then yields a lower bound on the dimension.
\end{abstract}

\maketitle

\section{Witnessing infinite-dimensionality?}

Standard quantum field theory posits that the state space of a quantum field---i.e.,~Fock space---is a Hilbert space of countably infinite dimension. On the other hand, modern quantum information theory concerns itself mostly with state spaces of finite dimension. One reason for this may be that in practical applications, the subspace of experimentally accessible states is typically finite-dimensional. Also, it is sometimes suggested that the state space of any quantum system (of bounded size) may be finite-dimensional; see e.g.~\cite{BiSu,DA}.

Let us consider a quantum system containing observables $x$ and $p$ satisfying the canonical commutation relation (CCR),
\beq
\label{ccr}
[x,p]=i\mathbbm{1} .
\eeq
We do not want to assume a specific r\^ole or interpretation of these observables; in particular, we do not assume them to be bounded or unbounded operators. In finite dimensions, satisfying the CCR is impossible, as can be seen by applying the trace to\eq{ccr} and noting that the trace of a commutator always vanishes, while the trace of the right-hand side does not. In particular, having a quantum system together with a pair of observables which is guaranteed to satisfy the CCR is necessarily infinite-dimensional. In this sense, the CCR can witness infinite-dimensionality. While finite-dimensional approximate representations of the CCR exist achieving any desired degree of accuracy, Occam's razor clearly favors the infinite-dimensional description: if a certain simple relation between the observables seems to hold up to experimental accuracy, it would be very unnatural to describe the system by a model in which this relation would only hold up to some error smaller than the experimental error. Or, in Einstein's words~\cite{Einstein},
\begin{quote}
It can scarcely be denied that the supreme goal of all theory is to make the irreducible basic elements as simple and as few as possible without having to surrender the adequate representation of a single datum of experience.
\end{quote}
If experimental data indicates a certain relation to hold up to experimental error, then any model in which this relation holds exactly should be preferred over one in which the relation does not hold exactly. In accordance with the scientific method~\cite{popper}, this holds until new data appears which may refute the relation.

Similar considerations necessarily apply to any idea witnessing infinite-dimensionality, and in particular to the one which we are going to propose. Any such idea needs to be based on a set of certain relations between observables or other operators with operational significance, such that these relations are compatible only on an infinite-dimensional Hilbert space. In this way, one could try and witness infinite-dimensionality by probing these relations experimentally. However, any infinite-dimensional representation of these relations can be compressed to a finite-dimensional subspace\footnote{Upon writing $P$ for the projection onto the subspace, this means replacing any operator $X$ by $PXP$, which is an operator acting on the subspace.}; such a compression yields an approximate finite-dimensional representation of the relations. Upon making these subspaces larger and larger, this approximate representation can achieve any desired degree of accuracy. This is one reason why witnessing infinite-dimensionality with perfect certainty is impossible; one needs to appeal to Occam's razor and say that any \emph{reasonably nice} model reproducing the experimental data requires an infinite-dimensional Hilbert space.

In principle, it is conceivable that some of the tested relations are not only based on experimental input, but also on other physical principles, for example on the commutativity of observables located on spacelike separated spacetime regions. Ultimately, the same reservations apply in this case: one can find models in finite Hilbert space dimension in which the physical principle, and therefore the enforced relations, only hold approximately. Since any physical principle itself is ultimately based on experiments, this is qualitatively the same situation as in the previous paragraph. The only advantage of this approach is quantitative and lies in the fact that a general and well-established physical principle would typically be based on significantly more experimental data than, say, some exotic relation pertaining to a single type of physical system.

In conclusion, we find that the CCR\eq{ccr} witnesses infinite-dimensionality in the sense that either (i) one works with infinite-dimensional Hilbert space, or (ii) one works in finite dimension, but has an unnatural and contrived model making slightly different predictions.

\section{Main result}

Here, we would like to present another set of relations besides\eq{ccr} which is, in the same sense, compatible only with an infinite-dimensional Hilbert space. The advantage of our proposal over the CCR is that it is (almost) device-independent, which makes it operationally well-defined. Our observation is this:

\begin{thm}
\label{mainthm}
\begin{enumerate}
\item\label{mainthm1} If $\H$ is finite-dimensional and $U,V\in\U(\H)$ satisfy $V^{-1}U^2V=U^3$, then $UV^{-1}UV=V^{-1}UVU$.
\item\label{mainthm2} If $\H$ is infinite-dimensional, then there are $U,V\in \U(\H)$ which satisfy $V^{-1}U^2V=U^3$, but $\langle UV^{-1}UV\psi,\, V^{-1}UVU\psi\rangle=0$ for some appropriate $\psi\in\H$. 
\end{enumerate}
\end{thm}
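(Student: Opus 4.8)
The plan is to recognize both statements as two faces of the Baumslag--Solitar group $G\bydef\langle a,b\mid b^{-1}a^2b=a^3\rangle$: a pair of unitaries $U,V$ with $V^{-1}U^2V=U^3$ is exactly a unitary representation $\pi$ of $G$ with $\pi(a)=U$ and $\pi(b)=V$, and the asserted identity $UV^{-1}UV=V^{-1}UVU$ says precisely that $\pi(a)$ commutes with $\pi(b^{-1}ab)$.

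For part \ref{mainthm1} I would first show that finite-dimensionality forces $U$ to have finite order $N$ with $\gcd(N,6)=1$. Since $U$ is unitary it is diagonalizable with all eigenvalues $\lambda_i$ (listed with multiplicity) on the unit circle, and $U^2=VU^3V^{-1}$ shows $U^2$ and $U^3$ are conjugate, hence have the same eigenvalue multiset; matching eigenvalues via a permutation $\sigma$ with $\lambda_i^2=\lambda_{\sigma(i)}^3$ and iterating this relation once around a $\sigma$-cycle of length $\ell$ yields $\lambda_i^{3^\ell}=\lambda_i^{2^\ell}$, so every eigenvalue is a root of unity and $N\bydef\mathrm{ord}(U)$ is finite. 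Conjugate elements have equal order, so $\mathrm{ord}(U^2)=\mathrm{ord}(U^3)$, i.e. $N/\gcd(N,2)=N/\gcd(N,3)$, which forces $\gcd(N,2)=\gcd(N,3)=1$. Choosing $s$ with $2s\equiv 1\pmod N$ and putting $W\bydef V^{-1}UV$, one has $W^2=V^{-1}U^2V=U^3$ while $\mathrm{ord}(W)=N$, so $W=W^{2s}=(W^2)^s=U^{3s}$ is a power of $U$; in particular $W$ commutes with $U$, which is exactly $UV^{-1}UV=V^{-1}UVU$. I expect the only point needing real care here to be the finiteness of $\mathrm{ord}(U)$; everything afterward is elementary cyclic-group arithmetic, and the clean identity $V^{-1}UV=U^{3s}$ is probably the shortcut worth advertising. (This is the finite-dimensional, ``Hopfian'' half of the phenomenon.)

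For part \ref{mainthm2} I would take $\pi$ to be the left regular representation $\lambda$ of $G$ on $\ell^2(G)$, a separable infinite-dimensional Hilbert space (tensor with the identity on an auxiliary space if a prescribed $\H$ is wanted). With $U=\lambda(a)$ and $V=\lambda(b)$ the relation $V^{-1}U^2V=U^3$ holds because it holds in $G$, and for the unit vector $\psi=\delta_e$ one has $UV^{-1}UV\,\psi=\delta_{ab^{-1}ab}$ and $V^{-1}UVU\,\psi=\delta_{b^{-1}aba}$, so $\langle UV^{-1}UV\psi,\,V^{-1}UVU\psi\rangle$ vanishes iff $ab^{-1}ab\neq b^{-1}aba$ in $G$. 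Hence everything reduces to showing that $a$ does not commute with $b^{-1}ab$ in $G$, i.e. that the word $w=a\,b^{-1}\,a\,b\,a^{-1}\,b^{-1}\,a^{-1}\,b$ is nontrivial; this is the actual content of the theorem, and it is the reason $BS(2,3)$ is non-Hopfian and hence (consistently with part \ref{mainthm1}) the reason no finite-dimensional example exists. To prove $w\neq e$ I would view $G$ as the HNN extension of $\langle a\rangle\cong\Z$ along the isomorphism $\langle a^2\rangle\to\langle a^3\rangle$, $a^2\mapsto a^3$, and apply Britton's lemma: a pinch $b^{-1}a^m b$ is removable only if $2\mid m$ and a pinch $b\,a^m\,b^{-1}$ only if $3\mid m$, and none of the three internal syllables of $w$, namely $b^{-1}ab$, $b\,a^{-1}b^{-1}$ and $b^{-1}a^{-1}b$, qualifies; so $w$ carries four $b^{\pm1}$-letters and no removable pinch, is therefore already reduced, and hence $w\neq e$. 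The main obstacle in this part is thus purely combinatorial-group-theoretic and is handled by HNN normal forms, while the operator side is immediate.
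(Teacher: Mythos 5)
Your proof is correct, and both halves take genuinely different routes from the paper.

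For part~\ref{mainthm1}, you and the paper both begin by observing that $U^2$ and $U^3$ are conjugate and deducing from the eigenvalue multiset that every eigenvalue of $U$ is a root of unity of order coprime to~$6$. But the two arguments then diverge. The paper uses the fact that $N_d$ is odd to conclude that $U^2$ has no more spectral degeneracy than $U$, hence eigenvectors of $U^2$ are eigenvectors of $U$; combined with $U^3 = V^{-1}U^2V$ commuting with $U$ this gives a simultaneous eigenbasis for $U$ and $V^{-1}UV$, hence commutativity. You instead note that $W\bydef V^{-1}UV$ has the same order $N$ as $U$ and satisfies $W^2=U^3$, and then invert $2$ modulo $N$ to get the identity $V^{-1}UV=U^{3s}$ outright. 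Your version is cleaner and actually proves more (that $V^{-1}UV$ lies in the cyclic group generated by $U$), and the arithmetic is more transparent. However, the paper's more cumbersome eigenbasis argument is designed to be stable under small perturbations; it is the backbone of the quantitative Theorem~\ref{appthm}, and your $\bmod\,N$ inversion trick does not obviously survive the passage from an exact relation to an $\eps$-approximate one. So the paper is not being gratuitously inefficient here.

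For part~\ref{mainthm2}, you take the left regular representation of $BS(2,3)$ on $\ell^2(G)$ and reduce the orthogonality claim to the nontriviality of $w = ab^{-1}aba^{-1}b^{-1}a^{-1}b$, which you settle by Britton's lemma for the HNN decomposition of $BS(2,3)$ over $\Z$. This is of course the classical route (it is essentially the non-Hopficity argument for $BS(2,3)$), and it is correct: none of the three internal $a^{\pm 1}$ syllables is a pinch since $2\nmid\pm 1$ and $3\nmid -1$. The paper deliberately avoids this machinery --- it constructs $U$ as a shift on $\ell^2(\N_0\times\Z)$ and builds $V$ by hand as a bijection matching $U^3$-orbits to $U^2$-orbits, then verifies the two group words move the basis vector $\delta_{(0,0)}$ to distinct basis vectors. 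The paper's construction is elementary but requires some care; yours is shorter but imports Britton's lemma as a black box, which is precisely what the paper was trying to circumvent (it explicitly advertises a proof ``independent of the literature on group theory''). One small point to tighten: for an arbitrary infinite-dimensional $\H$ you should say explicitly that $\H\cong\ell^2(G)\oplus\H'$ for some $\H'$ and extend $U,V$ by the identity on $\H'$, with $\psi=\delta_e\oplus 0$; ``tensor with the identity'' works too but changes the meaning of $\psi$.
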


This amounts to a big gap between infinite dimensions and any number of finite dimensions: in the finite-dimensional case, the relation $V^{-1}U^2V=U^3$ implies $\langle UV^{-1}UV\psi,\, V^{-1}UVU\psi\rangle=1$ for any $\psi$, while in infinite dimensions, also $\langle UV^{-1}UV\psi,\, V^{-1}UVU\psi\rangle=0$ is possible for some $\psi$.

The theorem essentially follows from the known proofs~\cite{M} that the Baumslag-Solitar group~\cite{BS}
\beq
\label{BS23}
BS(2,3)=\langle u,v \:|\: v^{-1}u^2v=u^3 \rangle
\eeq
is finitely generated but not residually finite, and therefore not maximally almost periodic~\cite[Prop.~4]{HRGV}. This means the following: the group\eq{BS23} contains non-trivial elements, in this case e.g. $(uv^{-1}uv)(v^{-1}uvu)^{-1}=uv^{-1}uvu^{-1}v^{-1}u^{-1}v$, which act trivially in every finite-dimensional unitary representation. In fact, any finitely presented but not residually finite group yields an analogous theorem; therefore, literature like~\cite{B,H1,H2,CFP} produces a myriad of examples with analogous implications for quantum mechanics. Also, $C^*$-algebraic results of similar flavor exist~\cite{Bekka}.

In Appendix~\ref{proofs}, we give a new proof of Theorem~\ref{mainthm} which is independent of the literature on group theory and uses nothing but linear algebra. 

Applying Theorem~\ref{mainthm} in order to witness infinite-dimensionality, in the same sense as above, is simple. We now explain how to do this and then comment on certain specific aspects of the procedure.

First of all, one needs to have a quantum system for which the theory predicts an infinite-dimensional state space; for example a mode of the electromagnetic field. Furthermore, one needs experimental access to unitary transformations $U$ and $V$ and their inverses $U^{-1}$ and $V^{-1}$ for which the theory predicts the relation $V^{-1}U^2V=U^3$, as well as a state $\psi$ for which $\langle UV^{-1}UV\psi,\, V^{-1}UVU\psi\rangle=0$ holds\footnote{This latter condition can certainly be relaxed, but having orthogonality gives the most dramatic effect.}. The existence of such $U$, $V$ and $\psi$ is guaranteed by Theorem~\ref{mainthm}\ref{mainthm2}.

Then the experiment should do the following:

\renewcommand{\labelenumi}{(\roman{enumi})}
\begin{enumerate}
\item Verify the relation $V^{-1}U^2V=U^3$ up to phase by generating two copies of an initial state $\phi$, applying the composed transformation $V^{-1}U^2V$ to the first copy and $U^3$ to the second copy, and then measure whether the two resulting states coincide up to phase,
\beq
\label{check0}
|\langle V^{-1}U^2V\phi, U^3\phi\rangle|^2\stackrel{?}{=}1 .
\eeq
One way to compare these two states is to perform a swap test~\cite{swap} and repeat many runs in order to gather statistics. This procedure should be repeated for as many initial states $\phi$ as are accessible to experimental preparation.
\item Prepare two copies of the initial state $\psi$ from above, apply the composed transformation $UV^{-1}UV$ to the first and $V^{-1}UVU$ to the second, and perform the swap test in order to determine whether
\beq
\label{check1}
\langle UV^{-1}UV\psi, V^{-1}UVU\psi\rangle \stackrel{?}{=} 0 .
\eeq
\end{enumerate}

If\eq{check0} holds for all $\phi\in\H$, then there is a phase $e^{i\alpha}$ such that $U^{-3}V^{-1}U^2V=e^{i\alpha}\cdot\mathbbm{1}$. Setting $U'\bydef e^{i\alpha}U$ then gives a pair of unitaries $(U',V)$ such that
$$
V^{-1}U'^2V=U'^3 ,\qquad U'V^{-1}U'V\psi \perp V^{-1}U'VU'\psi .
$$
Theorem~\ref{mainthm} says that no quantum-mechanical model on a finite-dimensional Hilbert space can reproduce these theoretical predictions. In other words, any finite-dimensional model needs to relax the relation $V^{-1}U^2V=e^{i\alpha} U^3$ in one way or the other. Given that all experimental evidence would suggest this relation to hold, any such model would be unnatural and contrived.

Of course, in accordance with the scientific method~\cite{popper}, ``verifying'' a relation like $V^{-1}U^2V=U^3$ or like\eq{ccr} means repeatedly putting it to test on an ever increasing number of initial states and continuously improving statistics without finding a violation. Both due to the very property of infinite-dimensionality and due to the perfect accuracy and infinite statistics required, \emph{proving} the relation $V^{-1}U^2V=U^3$ is a matter of impossibility; the equation\eq{check0} can at most be verified for finitely many $\phi\in\H$. This equation, after being predicted by a theoretical model, can at most resist all attempts at experimental falsification. The impossibility of \emph{proving} relations is a generic issue which arises not only in any proposal of witnessing infinite-dimensionality, but rather pertains to any empirical science.

Since we take $U$ and $V$ to stand for the unitary operators which are part of the infinite-dimensional model predicted by the theory, rather than for the actual transformation carried out on the system---which are not going to be perfect unitaries anyway, but rather quantum operations---there is no need to include error terms in the theoretical predictions\eq{check0} and\eq{check1}.

Let us compare this method of witnessing infinite-dimensionality with doing it via testing the CCR\eq{ccr}. What does it mean to say that two observables $x$ and $p$ satisfy the CCR? Is there a device-independent test for the CCR which would thereby witness the infinite-dimensionality of the state space? Although the CCR has been put to experimental test~\cite{ZPK} (on a relatively small set of initial states), this rests on many assumptions about theoretical models for quantum optics and therefore is not device-independent. Our proposal is superior in this respect since it has clear operational meaning, and is almost device-independent in the sense that the only ``device dependence'' enters through the use of a swap test.

On the other hand, while infinite-dimensional quantum system equipped with observables satisfying the CCR are at the very heart of quantum mechanics and quantum field theory, we have not yet been able to think of any physical system which allows the experimental implementation of unitaries $U$ and $V$ satisfying $V^{-1}U^2V=U^3$, but not $V^{-1}UVU=UV^{-1}UV$. For example, even though linear optics on finitely many modes is concerned with infinite-dimensional state spaces, we expect that taking $U$ and $V$ to be given by Bogoliubov transformations is not sufficient, since a Bogoliubov transformation is a symplectic matrix of finite size.

\section{A quantitative version}
\label{quant}

For any $d\in\N$, we write $N_d$ for a certain integer defined in terms of a least common multiple,
$$
N_d\bydef 3^d \cdot \mathrm{lcm}\left\{3^1-2^1,\ldots,3^d-2^d\right\} .
$$
It grows exponentially in $d^2$.

We have seen so far that any finite-dimensional model which tries to reproduce the predictions of the infinite-dimensional one will need to relax the equation $V^{-1}U^2V=U^3$. Now we give a quantitative version of Theorem~\ref{mainthm} which gives a bound on the Hilbert space dimension required as a function of $||V^{-1}U^2V-U^3||$.

\begin{thm}
\label{appthm}
Let $d\geq 3$. If $\dim(\H)\leq d$ and $U,V\in\U(\H)$ satisfy
\beq
\label{cons}
||V^{-1}U^2V-U^3|| < \eps
\eeq
for some $\eps<\frac{1}{6\cdot 3^d d N_d}$, then 
\beq
\label{ass}
||UV^{-1}UV-V^{-1}UVU|| < 4d^3 N_d\eps.
\eeq
\end{thm}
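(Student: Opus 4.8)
\emph{Proof proposal.}

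The plan is to upgrade the proof of Theorem~\ref{mainthm}\ref{mainthm1} by tracking how the error $\eps$ propagates through it. First I would set $a\bydef U$ and $b\bydef V^{-1}UV$, so that $b^2=V^{-1}U^2V$ holds \emph{exactly}; then the hypothesis\eq{cons} reads $\|b^2-a^3\|\lt\eps$ and the desired conclusion\eq{ass} is precisely $\|[a,b]\|=\|ab-ba\|\lt 4d^3N_d\eps$. The mechanism in the exact case is this: since $b^2$ is unitarily conjugate to $a^2$, the spectra of $a^2$ and $a^3$ coincide, and one extracts from this that every eigenvalue of $a$ is a root of unity of order dividing $N_d$, so $a^{N_d}=\mathbbm{1}$ (and likewise $b^{N_d}=\mathbbm{1}$); since $N_d$ is odd one then has $b=b^{N_d+1}=(b^2)^{(N_d+1)/2}=(a^3)^{(N_d+1)/2}=a^{3(N_d+1)/2}\in\langle a\rangle$, which commutes with $a$. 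The quantitative statement should follow by running exactly this argument with $\eps$-sized slack throughout.

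\emph{Spectral step.} Since $b^2=V^{-1}a^2V$, we have $\spec(a^2)=\spec(b^2)$, and\eq{cons} forces every eigenvalue of $a^3$ to lie within $\eps$ of $\spec(a^2)$. Thus for each eigenvalue $\mu$ of $a$ there is an eigenvalue $\nu$ with $|\mu^3-\nu^2|\lt\eps$, and since $\eps$ is tiny this gives $|3\arg\mu-2\arg\nu|\lt\tfrac{\pi}{2}\eps$ modulo $2\pi$. Choosing one such $\nu$ for each $\mu$ defines a self-map $f$ of the set $\spec(a)$, which has at most $d$ elements; hence every $f$-orbit is eventually periodic, with preperiod $s$ and period $\ell$ obeying $s+\ell\le d$. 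Telescoping the approximate relations $3\arg\mu\approx 2\arg f(\mu)$ once around a cycle gives $(3^\ell-2^\ell)\arg\mu_0\equiv(\mathrm{error})\pmod{2\pi}$ with the error bounded by $(3^\ell-2^\ell)\cdot\tfrac{\pi}{2}\eps$; here the hypothesis $\eps\lt\tfrac{1}{6\cdot 3^d dN_d}$ keeps this error below $\pi$, so that no wrap-around occurs, and one concludes that $\mu_0$ lies within $\tfrac{\pi}{2}\eps$ of a $(3^\ell-2^\ell)$-th root of unity. Pulling this back through the $s$ preperiodic steps keeps the error at $\tfrac{\pi}{2}\eps$ (it satisfies a recursion with fixed point $\tfrac{\pi}{2}\eps$) while multiplying the order by at most $3^s$. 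Since $3^s(3^\ell-2^\ell)$ divides $N_d$, every eigenvalue of $a$ lies within $\tfrac{\pi}{2}\eps$ of an $N_d$-th root of unity; as $\eps$ is small this identifies a unique such root for each eigenvalue, so there is a unitary $\tilde a$ with $\tilde a^{N_d}=\mathbbm{1}$, $[\tilde a,a]=0$ and $\|a-\tilde a\|\le\tfrac{\pi}{2}\eps$.

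\emph{Assembly.} Put $\tilde b\bydef V^{-1}\tilde aV$, so that $\tilde b^{N_d}=\mathbbm{1}$, $\|b-\tilde b\|=\|a-\tilde a\|$, and $\tilde b^2=V^{-1}\tilde a^2V$, whence the triangle inequality gives $\|\tilde b^2-\tilde a^3\|\le 5\|a-\tilde a\|+\eps$. Using that $N_d$ is odd, write $\tilde b=\tilde b^{N_d+1}=(\tilde b^2)^{(N_d+1)/2}$ and $(\tilde a^3)^{(N_d+1)/2}=\tilde a^{3(N_d+1)/2}=\tilde a^p$ with $p\bydef 3(N_d+1)/2\bmod N_d$; the power inequality $\|X^k-Y^k\|\le k\,\|X-Y\|$ for unitaries then gives $\|\tilde b-\tilde a^p\|\le\tfrac{N_d+1}{2}\,\|\tilde b^2-\tilde a^3\|$. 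Since $\tilde a^p$ commutes with $\tilde a$ we get $\|[\tilde a,\tilde b]\|\le 2\,\|\tilde b-\tilde a^p\|$, and separately $\|[a,b]-[\tilde a,\tilde b]\|\le 4\,\|a-\tilde a\|$. Combining these estimates bounds $\|[a,b]\|$ by a constant multiple of $N_d\eps$, and a routine bookkeeping of the constants yields the claimed $\|[a,b]\|\lt 4d^3N_d\eps$ (which is far from optimal).

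\emph{Main obstacle.} The delicate point is the telescoping in the spectral step. A priori, iterating the relation $3\arg\mu\approx 2\arg f(\mu)$ around a cycle of length $\ell$ could inflate the error by a factor as large as $3^\ell$, which would be ruinous; the saving observation is that the relevant modulus $3^\ell-2^\ell$ is inflated by the \emph{same} factor, so the error in locating an eigenvalue relative to a $(3^\ell-2^\ell)$-th root of unity stays of order $\eps$, uniformly in $\ell$ and hence in $d$. A second, more mundane difficulty is that $f$ need not be a bijection (multiplicities can migrate under the perturbation), so orbits acquire preperiods of length up to $d$ --- this is the source of the factor $3^d$ in $N_d$ --- and one must keep all the modular reductions honest, ensuring that neither the accumulated errors nor the relevant roots of unity ever wrap around the circle; this is exactly what the smallness hypothesis on $\eps$, with its $3^d$ and $N_d$, provides. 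Everything else --- the power rules, the triangle inequalities, and the explicit final constant --- I expect to be routine.
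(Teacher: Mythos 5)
Your argument is essentially correct, and it takes a genuinely different route from the paper in the crucial assembly step. Both proofs share the spectral preliminary: under hypothesis\eq{cons}, every eigenvalue of $U$ lies close to an $N_d$-th root of unity, extracted by iterating the approximate relation $3\lambda_k\approx 2\lambda_{k+1}$ until pigeonhole closes a cycle. Your telescoping bound $(3^n-2^n)\eps$ is in fact sharper than the paper's $3^{n-1}n\eps$; pushed through the cycle and the preperiod recursion (whose fixed point keeps the error at $O(\eps)$), you get each eigenvalue within $O(\eps)$ of an $N_d$-th root, whereas the paper only proves distance $3^d d\eps$ --- a looser estimate that still suffices for its argument because of the generous smallness hypothesis on $\eps$. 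Where you diverge is what comes after. The paper partitions $\spec(U)$ into \emph{blocks} of eigenvalues sharing the same nearest $N_d$-th root, bounds the cross-block matrix entries $\|P_{\lambda'}VP_\lambda\|$ and $\|P_\lambda V^{-1}P_{\lambda'}\|$ by $3N_d\eps$, and then expands $UV^{-1}UV-V^{-1}UVU$ in the spectral projectors of $U$, estimating within-block and cross-block contributions separately, with a final sum over $d^3$ triples that produces the $d^3$ in the bound. You instead round $U$ to a commuting finite-order unitary $\tilde a$ with $\tilde a^{N_d}=\mathbbm{1}$ and $\|a-\tilde a\|\lesssim\eps$, conjugate to get $\tilde b\bydef V^{-1}\tilde a V$ of the same order, and then exploit the \emph{algebraic} identity $\tilde b=\tilde b^{N_d+1}=(\tilde b^2)^{(N_d+1)/2}$ --- exactly where the oddness of $N_d$ enters, mirroring the group-theoretic fact that under a homomorphism $BS(2,3)\to\Z/N_d$ the image of $v^{-1}uv$ is forced to be a power of the image of $u$ --- together with the power inequality $\|X^k-Y^k\|\le k\|X-Y\|$ to show $\tilde b$ is $\tfrac{N_d+1}{2}\cdot O(\eps)$-close to the commuting power $\tilde a^p$. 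This avoids the block combinatorics entirely, hews more closely to the group-theoretic mechanism underlying Theorem~\ref{mainthm}, and after honest bookkeeping yields a constant of roughly $9N_d\eps$ rather than $4d^3N_d\eps$, dropping the $d^3$ factor (so the claimed inequality\eq{ass} certainly follows for $d\ge 3$). Two small points worth attending to in a write-up: the opening claim that every eigenvalue of $a^3$ lies within $\eps$ of $\spec(a^2)$ is the content of Lemma~\ref{evlemma} --- true for normal operators but not free --- and should be cited or re-proved; and the chord-to-arc conversion factors ($\tfrac{\pi}{2}$, etc.) should be carried through explicitly to pin down the final constant.
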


See also Appendix~\ref{proofs} for the proof. The counterpoint showing that the same conclusion does not hold for more than $d$ dimensions is still Theorem~\ref{mainthm}.\ref{mainthm2}. 

The physical interpretation of this result is similar to the previous one. In infinite dimensions, there are situations in which $V^{-1}U^2V=U^3$ holds, while\eq{ass} does not. Any quantum-mechanical model which violates\eq{ass} and deviates from the relation $V^{-1}U^2V=U^3$ by at most $\tfrac{1}{6\cdot 3^d d N_d}$ will require a Hilbert space of dimension greater than $d$. Again, the usual leap of faith is required here, or rather appeal to the scientific method~\cite{popper}: since the hypothesis $||V^{-1}U^2V-U^3||<\eps$ cannot be verified on all initial states, it has to be taken as a working hypothesis which gets constantly subjected to experimental scrutiny and is regarded as valid as long as evidence to the contrary is found.

Since the growth of $N_d$ as a function of $d$ is so huge, the dimension bounds obtained by this method are extremely weak.

\bibliographystyle{plain}
\bibliography{infinite_witness}

\begin{thebibliography}{10}

\bibitem{B}
Gilbert Baumslag.
\newblock A finitely presented solvable group that is not residually finite.
\newblock {\em Math. Z.}, 133:125--127, 1973.

\bibitem{BS}
Gilbert Baumslag and Donald Solitar.
\newblock Some two-generator one-relator non-{H}opfian groups.
\newblock {\em Bull. Amer. Math. Soc.}, 68:199--201, 1962.

\bibitem{Bekka}
M.~B. Bekka.
\newblock On the full {$C^*$}-algebras of arithmetic groups and the congruence
  subgroup problem.
\newblock {\em Forum Math.}, 11(6):705--715, 1999.

\bibitem{BiSu}
Daniela Bigatti and Leonard Susskind.
\newblock {TASI} lectures on the holographic principle, 2000.
\newblock \href{http://arxiv.org/abs/hep-th/0002044}{arXiv:hep-th/0002044}.

\bibitem{swap}
Harry Buhrman, Richard Cleve, John Watrous, and Ronald de~Wolf.
\newblock Quantum fingerprinting.
\newblock {\em Phys. Rev. Lett.}, 87:167902, Sep 2001.

\bibitem{CFP}
J.~W. Cannon, W.~J. Floyd, and W.~R. Parry.
\newblock Introductory notes on {R}ichard {T}hompson's groups.
\newblock {\em Enseign. Math. (2)}, 42(3-4):215--256, 1996.

\bibitem{DA}
Giacomo~Mauro D'Ariano.
\newblock Physics as quantum information processing: Quantum fields as quantum
  automata, 2011.
\newblock \href{http://arxiv.org/abs/1110.6725}{arXiv:1110.6725}.

\bibitem{HRGV}
Pierre de~la Harpe, A.~Guyan Robertson, and Alain Valette.
\newblock On exactness of group {$C^*$}-algebras.
\newblock {\em Quart. J. Math. Oxford Ser. (2)}, 45(180):499--513, 1994.

\bibitem{Einstein}
Albert Einstein.
\newblock On the method of theoretical physics.
\newblock {\em Philosophy of Science}, 1:163--169, 1934.

\bibitem{H1}
Graham Higman.
\newblock A finitely generated infinite simple group.
\newblock {\em J. London Math. Soc.}, 26:61--64, 1951.

\bibitem{H2}
Graham Higman.
\newblock {\em Finitely presented infinite simple groups}.
\newblock Department of Pure Mathematics, Department of Mathematics, I.A.S.
  Australian National University, Canberra, 1974.
\newblock Notes on Pure Mathematics, No. 8 (1974).

\bibitem{M}
Stephen Meskin.
\newblock Nonresidually finite one-relator groups.
\newblock {\em Trans. Amer. Math. Soc.}, 164:105--114, 1972.

\bibitem{popper}
Karl~R. Popper.
\newblock {\em The logic of scientific discovery}.
\newblock Routledge classics. Routledge, 2002.

\bibitem{ZPK}
A.~Zavatta, V.~Parigi, M.~S. Kim, H.~Jeong, and M.~Bellini.
\newblock Experimental demonstration of the bosonic commutation relation via
  superpositions of quantum operations on thermal light fields.
\newblock {\em Phys. Rev. Lett.}, 103:140406, Oct 2009.

\end{thebibliography}

\newpage

\appendix

\section{Proofs}
\label{proofs}

By abuse of terminology, we define an eigenvalue $\lambda$ of a unitary $U\in\U(\H)$ to be a number $\lambda\in\R$ for which there exists a non-zero vector $\xi\in\H$ such that
$$
U\xi = e^{2\pi i\,\lambda}\xi \:.
$$
This $\lambda$ is well-defined up to addition of an integer. Hence we think of $\lambda$ as an element of the abelian group $\R/\Z$. The spectrum $\spec(U)$ is then a subset of $\R/\Z$. If $\lambda$ is an eigenvalue of $U$ and $n\in\Z$, then $n\lambda$ is an eigenvalue of $U^n$.

We define the absolute value function
$$
|\cdot| \::\: \R/\Z \longrightarrow \R \:,\qquad [x]\mapsto \min_{n\in\Z} |x+n| \:.
$$
where $x\in\R$ is some representative of an equivalence class $[x]\in\R/\Z$. By this definition, the absolute value of $\lambda\in\R/\Z$ always satisfies $0\leq |\lambda|\leq \tfrac{1}{2}$. Measuring the distance between $\lambda\in\R/\Z$ and $\lambda'\in\R/\Z$ by $|\lambda-\lambda'|$ defines a metric on $\R/\Z$. Moreover, the triangle inequality in the form $|\lambda+\lambda'|\leq |\lambda|+|\lambda'|$ is also valid.

\begin{proof}[Proof of Theorem~\ref{mainthm}]
\begin{enumerate}[leftmargin=0cm,itemindent=.8cm]
\item[\ref{mainthm1}] We write $d=\dim(\H)$.

We start by noting that the assumption $V^{-1}U^2V=U^3$ means that $U^2$ and $U^3$ are conjugate. In particular, $\spec(U)$ has the property that for all $\lambda_0\in \spec(U)$ there exists some $\lambda_1\in \spec(U)$ such that
\beq
\label{maineq}
3\lambda_0=2\lambda_1 \qquad\textrm{(in $\R/Z$)}.
\eeq
Conversely, for any $\lambda_1\in\spec(U)$ there exists some $\lambda_0\in\spec(U)$ such that\eq{maineq} holds. Applying the first of these observations $d$ times starting with any $\lambda_0\in\spec(U)$ yields a sequence $\lambda_0,\ldots,\lambda_d$ with $3\lambda_k=2\lambda_{k+1}$, which implies $3^n\lambda_k=2^n\lambda_{k+n}$ for all sensible $k$ and $n$. By $|\spec(U)|\leq d$ and the pigeonhole principle, there are indices $k$ and $n\geq 1$ such that $\lambda_k=\lambda_{k+n}$, which implies
$$
\left(3^n-2^n\right)\lambda_k=0 .
$$
In other words, $\lambda_k$ can be written as a rational number in $\Q/\Z$ with denominator $3^n-2^n$. By $3^k\lambda_0=2^k\lambda_k$, we find
$$
3^k(3^n-2^n)\lambda_0 = 0 ,
$$
so that $\lambda_0$ is a rational number in $\Q/\Z$ with denominator $3^k(3^n-2^n)$. Since this number is a divisor of $N_d$ (as defined in Section~\ref{quant}) and $\lambda_0\in\spec(U)$ was arbitrary, we find that any eigenvalue of $U$ satisfies $N_d\lambda=0$. Since $N_d$ is odd, this proves that if $\lambda\in\spec(U)$, then $\left(\lambda+\tfrac{1}{2}\right)\not\in\spec(U)$. In other words, $U^2$ does not have more spectral degeneracy than $U$, and every eigenvector of $U^2$ is also an eigenvector of $U$. Similarly, every eigenvector of $V^{-1}U^2V$ is also an eigenvector of $V^{-1}UV$.

Now since $V^{-1}U^2V=U^3$ commutes with $U$, there is a basis of common eigenvectors of $V^{-1}U^2V$ and $U$. By the result of the previous paragraph, this basis is also a basis of common eigenvectors for $V^{-1}UV$ and $U$, so that these two commute.

\item[\ref{mainthm2}] It is enough to show this for some specific infinite-dimensional separable $\H$.

We consider the set $\N\times\Z$ equipped with bijections $U$ and $V$. For $U$ we take the shift
\beq
\label{UVex}
U\::\:  (x,y)\mapsto (x,y+1) .
\eeq
Then the orbits of $U^2$ (resp.~$U^3$) are the sets of the form $(x,y+2\Z)$ (resp.~$(x,y+3\Z)$). In order to define $V$, we choose any bijection from the orbits of $U^3$ to the orbits of $U^2$; since there are countably many of each kind, this is certainly possible. Any such bijection can be lifted to a bijection $V:\N\times\Z\mapsto\N\times\Z$ with $V^{-1}U^2V=U^3$. With the appropriate choices, this can be done such that
$$
V(0,0) = (0,0),\qquad V(0,1)=(0,1),\qquad V(0,-2) = (0,1).
$$
since $(0,-2)$ and $(0,1)$ are in the same $U^3$-orbit, while $(0,-1)$ and $(0,1)$ are in the same $U^2$-orbit. With such a choice of $V$, we obtain $UV^{-1}UV(0,0)=(0,-1)$, while $V^{-1}UVU(0,0)=(0,0)$. In particular, $UV^{-1}UV\neq V^{-1}UVU$. 

Linearly extending these bijections to unitary operators $U$ and $V$ on $\ell^2(\N_0\times\Z)$ preserves these properties. One can then take $\psi$ to be the basis vector associated to $(0,0)$, and the equation $\langle UV^{-1}UV\psi,\, V^{-1}UVU\psi\rangle=0$ holds.
\end{enumerate}
\end{proof}

Now we turn towards the proof of Theorem~\ref{appthm}, beginning with a lemma.

\begin{lem}
\label{evlemma}
Let $S\in \U(\H)$ be a unitary with $\H$ finite-dimensional. If $\xi\in\H$ is a unit vector and $\beta\in\R/\Z$ is such that
$$
||S\xi - e^{2\pi i\,\beta}\xi|| < \delta \:,
$$
for some $0<\delta<1$, then there is an eigenvalue $\lambda\in\spec(S)$ with $|\lambda-\beta|<\delta$.
\end{lem}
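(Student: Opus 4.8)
The plan is to diagonalize $S$ and reduce the statement to an elementary estimate on the unit circle. Since $\H$ is finite-dimensional and $S$ is unitary, there is an orthonormal eigenbasis $\eta_1,\dots,\eta_d$ of $S$, say $S\eta_j=e^{2\pi i\,\lambda_j}\eta_j$ with $\lambda_j\in\R/\Z$. Write $\xi=\sum_j c_j\eta_j$, so that $\sum_j|c_j|^2=1$ because $\xi$ is a unit vector. By orthonormality,
\[
\|S\xi - e^{2\pi i\,\beta}\xi\|^2 = \sum_j |c_j|^2\,\bigl|e^{2\pi i\,\lambda_j} - e^{2\pi i\,\beta}\bigr|^2 < \delta^2 .
\]
The left-hand side is a convex combination of the nonnegative numbers $\bigl|e^{2\pi i\,\lambda_j}-e^{2\pi i\,\beta}\bigr|^2$, so at least one of them is strictly smaller than $\delta^2$. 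Fix such an index $j$ and set $\lambda\bydef\lambda_j\in\spec(S)$, so that $\bigl|e^{2\pi i(\lambda-\beta)}-1\bigr|<\delta$.

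It then remains to turn this chordal estimate into an arc-length estimate for the metric on $\R/\Z$. Picking the representative $r=|\lambda-\beta|\in[0,\tfrac12]$, one has $\bigl|e^{2\pi i(\lambda-\beta)}-1\bigr| = 2\sin(\pi r)$. Since $\sin$ is concave on $[0,\tfrac{\pi}{2}]$, the graph of $r\mapsto\sin(\pi r)$ on $[0,\tfrac12]$ lies above the chord joining $(0,0)$ and $(\tfrac12,1)$, i.e.\ $\sin(\pi r)\geq 2r$ there. Hence $\delta > 2\sin(\pi r)\geq 4r = 4|\lambda-\beta|$, which in fact gives the sharper conclusion $|\lambda-\beta|<\delta/4\leq\delta$, as required.

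I do not expect a real obstacle here; the only point worth a moment's care is the elementary inequality $\sin(\pi r)\geq 2r$ for $r\in[0,\tfrac12]$, which is precisely what lets one pass from the distance between eigenvalues on the unit circle to the distance between the corresponding elements of $\R/\Z$. One should also keep track of the strict inequalities: the strict bound $\|S\xi-e^{2\pi i\beta}\xi\|<\delta$ forces at least one summand to be $<\delta^2$, which then propagates to $|\lambda-\beta|<\delta/4$. The hypothesis $\delta<1$ is not actually used in the argument, but it makes the statement non-vacuous, since $|\lambda-\beta|\leq\tfrac12$ holds automatically.
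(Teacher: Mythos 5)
Your proof is correct and takes essentially the same route as the paper: diagonalize $S$, observe that the squared norm is a convex combination of chordal distances $|e^{2\pi i(\lambda_j-\beta)}-1|^2$ forcing one of them below $\delta^2$, then convert the chordal bound to an arc-length bound on $\R/\Z$. The only difference is cosmetic — the paper normalizes to $\beta=0$ and uses the weaker elementary inequality $|\lambda|<|e^{2\pi i\lambda}-1|$ where you use $\sin(\pi r)\geq 2r$, which happens to yield the slightly sharper $|\lambda-\beta|<\delta/4$.
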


\begin{proof}
We assume $\beta=0$ without loss of generality. Let
$$
S=\sum_{\lambda\in\spec(S)} e^{2\pi i\, \lambda}P_\lambda
$$
be the spectral decomsposition of $S$. Then the assumption means that
$$
||S\xi-\xi||=\left|\left|\sum_{\lambda} \left(e^{2\pi i\,\lambda}-1\right) P_\lambda \xi \right|\right| \leq \sqrt{\sum_\lambda \left|e^{2\pi i\,\lambda}-1\right|^2\cdot ||P_\lambda\xi||^2} < \delta \:.
$$
Since $\sum_\lambda ||P_\lambda\xi||^2=1$, this means that there is at least one $\lambda\in\spec(S)$ with $|e^{2\pi i\,\lambda}-1|<\delta$. Since $|\lambda|<|e^{2\pi i\,\lambda}-1|$ for $|\lambda|\leq\tfrac{1}{2}$, this eigenvalue satisfies $|\lambda|<\delta$.
\end{proof}

\begin{proof}[Proof of Theorem~\ref{appthm}]
We partly imitate the proof of Theorem~\ref{mainthm}.\ref{mainthm1}, this time keeping track of more quantitative estimates.

Given the assumption, it needs to be shown that the commutator $[V^{-1}UV,\, U]$ is small. 

The assumption\eq{ass} means that $U^2$ and $U^3$ are close to conjugate. In particular, applying Lemma~\ref{evlemma} to $S=V^{-1}U^2V$ with $\xi$ any eigenvector of $U$ associated to an eigenvalue $\lambda_0\in\spec(U)$ shows the existence of some $\lambda_1\in\spec(U)$ such that
\beq
\label{maineq2}
|2\lambda_1-3\lambda_0|<\eps .
\eeq
Conversely, analogous reasoning shows that for any $\lambda_1\in\spec(U)$ there exists some $\lambda_0\in\spec(U)$ such that\eq{maineq2} holds. Applying the first of these observations $d$ times starting with any $\lambda_0\in\spec(U)$ yields a sequence $\lambda_0,\ldots,\lambda_d$ with $|3\lambda_k-2\lambda_{k+1}|<\eps$, which implies $|3^n\lambda_k-2^n\lambda_{k+n}|<3^{n-1}n\eps$ for all sensible $k$ and $n$.

By $|\spec(U)|\leq d$ and the pigeonhole principle, there are indices $k$ and $n\geq 1$ such that $\lambda_k=\lambda_{k+n}$, which implies
$$
\left|\left(3^n-2^n\right)\lambda_k\right| < 3^{n-1} n\eps .
$$
By $\left|3^k\lambda_0-2^k\lambda_k\right| < 3^{k-1} k\eps$, we find
\begin{align}
\left|3^k(3^n-2^n)\lambda_0 \right| &= \left| 3^n\left(3^k\lambda_0 - 2^k\lambda_k\right) + 2^k\left(3^n-2^n\right)\lambda_k + 2^n\left(2^k\lambda_k-3^k\lambda_0\right) \right| \\
&< 3^n\cdot 3^{k-1}k\eps + 2^k\cdot 3^{n-1}n\eps + 2^n\cdot 3^{k-1}k\eps <  3^d d\eps \:,
\end{align}
where the last estimate also used $n+k\leq d$. This shows that $\lambda_0$ is close to a rational number with denominator $3^k(3^n-2^n)$. In order to remove the dependence on $k$ and $n$, we recall that $3^k(3^n-2^n)$ divides $N_d$, so that the estimate can be weakened to
$$
|N_d\lambda_0| < 3^d d N_d \eps .
$$
Since $3^d d N_d \eps < 1/6$ by assumption, the distance from $\lambda_0$ to the closest rational with denominator $N_d$ is less than $\tfrac{1}{6N_d}$. We denote the associated integer numerator by $\mu_0$, so that
\beq
\label{approx}
\left|\lambda_0 - \frac{\mu_0}{N_d}\right| < 3^d d \eps < \frac{1}{6N_d}
\eeq
holds for any $\lambda_0\in\spec(U)$. For two different eigenvalues $\lambda,\lambda'\in\spec(U)$, the associated numerators $\mu,\mu'$ coincide as soon as $|\lambda-\lambda'| < \tfrac{1}{3N_d}$. Having the same associated numerator is an equivalence relation $\sim$ on $\spec(U)$, whose equivalence classes we call \emph{blocks}. Two eigenvalues lie in the same block iff their best rational approximation with denominator $N_d$ is the same. If $\lambda,\lambda'$ lie in different blocks, then necessarily $|\lambda-\lambda'|>\tfrac{2}{3N_d}$.

In terms of the associated numerators,\eq{maineq2} and\eq{approx} imply
$$
\left|\frac{3\mu_0 - 2\mu_1}{N_d}\right| \leq \left| 3\left(\frac{\mu_0}{N_d} - \lambda_0\right) - 2\left(\frac{\mu_1}{N_d} - \lambda_1\right) + \left(3\lambda_0 - 2\lambda_1\right) \right| < \frac{1}{2N_d} + \frac{1}{3N_d} + \eps < \frac{1}{N_d} \:.
$$
Since $\mu_0$ and $\mu_1$ are integers (modulo $N_d$) and $N_d$ is odd, this makes $\mu_1$ unique (modulo $N_d$). So for a given $\lambda_0\in\spec(U)$ there is a unique block of eigenvalues $\lambda_1$ which can possibly satisfy\eq{maineq2}. For any $\lambda\in\spec(U)$, we also write $\tfrac{3}{2}\lambda$ for any $\lambda_1$ obtained in this way from $\lambda_0=\lambda$. 

We now prove the inequality\eq{cons}. Let $U=\sum_{\lambda\in\spec(U)}e^{2\pi i\,\lambda}P_\lambda$ be the spectral decomposition of $U$. We begin by bounding $||P_{\lambda'}VP_\lambda||$ for $\lambda,\lambda'\in\spec(U)$. The assumption $||VU^3-U^2V||<\eps$ implies
$$
||P_{\lambda'}V\cdot e^{2\pi i\,\cdot 3\lambda}P_\lambda - P_{\lambda'}e^{2\pi i\,\cdot 2\lambda'}VP_\lambda||<\eps \:,
$$
so that
$$
||P_{\lambda'}VP_\lambda||< \frac{\eps}{|e^{2\pi i\,\cdot 3\lambda} - e^{2\pi i\,\cdot 2\lambda'}|} = \frac{\eps}{\left|e^{2\pi i\,(3\lambda-2\lambda')}-1\right|}\:.
$$
When $\frac{3}{2}\lambda\not\sim \lambda'$, i.e. $\frac{3}{2}\lambda$ and $\lambda'$ do not lie in the same block, then
$$
|3\lambda - 2\lambda'| = |3\lambda - 2\cdot\tfrac{3}{2}\lambda + 2\cdot\tfrac{3}{2}\lambda - 2\lambda'|\geq 2|\tfrac{3}{2}\lambda - 2\lambda'| - |3\lambda - 2\cdot\tfrac{3}{2}\lambda| > \frac{2}{3N_d} - \eps > \frac{1}{3N_d}.
$$
In this case,
\beq
\label{es1}
||P_{\lambda'}VP_\lambda|| < \frac{\eps}{\left|e^{2\pi i\, \frac{1}{4N_d}}-1\right|} < 3N_d\eps \:.
\eeq
An analogous argument shows that the same estimate holds for $V^{-1}$: when $\frac{3}{2}\lambda\not\sim \lambda'$, then
\beq
\label{es2}
||P_{\lambda}V^{-1}P_{\lambda'}|| < 3N_d\eps \:.
\eeq
Now the desired quantity can be estimated as
\begin{align*}
||UV^{-1}& UV-V^{-1}UVU|| \\[.2cm]
& =\left|\left| \sum_{\lambda_1,\lambda_2\in\spec(U)} e^{2\pi i\,(\lambda_1+\lambda_2)} P_{\lambda_1}V^{-1}P_{\lambda_2}V - \sum_{\lambda_2,\lambda_3\in\spec(U)} e^{2\pi i\,(\lambda_2+\lambda_3)}V^{-1}P_{\lambda_2} VP_{\lambda_3} \right|\right| \\[.2cm]
& =\left|\left|\sum_{\lambda_1,\lambda_2,\lambda_3}\left(e^{2\pi i\,(\lambda_1+\lambda_2)}-e^{2\pi i\,(\lambda_2+\lambda_3)}\right)P_{\lambda_1}V^{-1}P_{\lambda_2}VP_{\lambda_3}\right|\right| \\[.2cm]
& \leq \sum_{\lambda_1,\lambda_2,\lambda_3}\left|\left|\left(e^{2\pi i\,\lambda_1}-e^{2\pi i\,\lambda_3}\right)P_{\lambda_1}V^{-1}P_{\lambda_2}VP_{\lambda_3}\right|\right| \\[.2cm]
& <  \sum_{\substack{\lambda_1,\lambda_2,\lambda_3\\ \textrm{ s.t. }\frac{3}{2}\lambda_1\sim \lambda_2\sim\frac{3}{2}\lambda_3}} \left|e^{2\pi i\,(\lambda_1-\lambda_3)}-1\right| + 2d^3\cdot 3N_d\eps \:.
\end{align*}
where the last estimate follows from\eq{es1},\eq{es2}. From $\frac{3}{2}\lambda_1\sim \lambda_2\sim\frac{3}{2}\lambda_3$, we conclude $\lambda_1\sim\lambda_3$, and hence $|\lambda_1-\lambda_3|<2\cdot 3^d d\eps$ by\eq{approx}. This lets us bound each summand by $2\pi \cdot 2\cdot 3^dd\eps$, so that
$$
||UV^{-1}UV-V^{-1}UVU|| <  d^3 ( 4\pi\cdot 3^d d + 3N_d ) \eps
$$
Since $4\pi\cdot 3^d d < N_d$ for $d\geq 3$, the assertion follows.
\end{proof}

\end{document}